\let\doendproof\endproof
\renewcommand\endproof{~\hfill$\qed$\doendproof}
\newcommand{\remove}[1]{{}}
\newcommand{\arraycontrol}{\footnotesize}
\newcommand{\Stick}{{Stick}\xspace}
\newcommand{\StickAB}{{Stick$_{AB}$}\xspace} 
\newcommand{\StickA}{{Stick$_A$}\xspace} 
\newcommand{\StickfA}{{Stick$_{\forall A}$}\xspace} 
\title{Recognition and Drawing of Stick Graphs}
\author{ Felice De Luca\inst{1}  \and Md Iqbal Hossain\inst{1} \and   Stephen Kobourov\inst{1}\and Anna Lubiw\inst{2}\and Debajyoti Mondal\inst{3}}
\institute{
 Department of Computer Science, University of Arizona, USA \email{\{felicedeluca,hossain\}@email.arizona.edu, kobourov@cs.arizona.edu}
\and 
Cheriton School of Computer Science, University of Waterloo, Canada \email{alubiw@uwaterloo.ca}
\and 
Department of Computer Science, University of Saskatchewan, Canada \email{dmondal@cs.usask.ca}} 
\begin{document}
\maketitle

\begin{abstract}
A \emph{Stick graph} is an intersection graph of axis-aligned segments such that the left end-points of the horizontal segments and the bottom end-points of the vertical segments lie on a ``ground line,'' a line with slope $-1$. It is an open question to decide in polynomial time whether a given bipartite graph $G$ with bipartition $A\cup B$ has a \Stick representation where the vertices in $A$ and $B$ correspond to horizontal and vertical segments, respectively. We prove that $G$ has a \Stick representation if and only if there are orderings of $A$ and $B$ such that $G$'s bipartite adjacency matrix with rows $A$ and columns $B$ excludes three small `forbidden' submatrices. This is similar to characterizations for other classes of bipartite intersection graphs.  

We present an algorithm to test whether given orderings of $A$ and $B$ permit a \Stick representation respecting those orderings, and to find such a representation if it exists.  The algorithm runs in time linear in the size of the adjacency matrix. For the case when only the ordering of $A$ is given, we present an $O(|A|^3|B|^3)$-time algorithm.
When neither ordering is given, we present some partial results about graphs that are, or are not, \Stick representable. 
\end{abstract}
 
\section{Introduction}
Let $\mathcal{O}$ be a set of geometric objects in the Euclidean plane. The \emph{intersection graph} of $\mathcal{O}$ is a graph where each vertex 
corresponds to a distinct object in $\mathcal{O}$, and two vertices are adjacent if and only if the corresponding objects intersect. Recognition of intersection graphs that arise from different types of geometric objects such as segments, rectangles,  discs, intervals, etc., is a classic problem in combinatorial geometry. 
Some of these classes, such as interval graphs~\cite{booth1976testing},  can be recognized in polynomial-time, whereas many others are NP-hard~\cite{DBLP:conf/wg/CardinalFMTV17,kratochvil1994special,mustata2013unit}. 
There are many beautiful results that characterize intersection classes in terms of a vertex ordering without certain forbidden patterns, and recently, Hell {\em et al.}~\cite{hell2014ordering} unified many previous results by giving a general polynomial time recognition algorithm for all cases of small forbidden patterns. 

In this paper we study a class of bipartite intersection graphs called \emph{Stick graphs}.  
A \emph{Stick graph} is an intersection graph of axis-aligned segments with the property that the left end-points of horizontal segments and the bottom end-points of vertical segments all lie on a \emph{ground line}, $\ell$, which we take, without loss of generality, to be a line of slope $-1$.  See Fig.~\ref{fig:intro}(a)--(b).
It is an open problem to recognize \Stick graphs in polynomial time~\cite{chaplick2015grid}. 

\begin{figure}[pt]
\centering
\includegraphics[width=\textwidth]{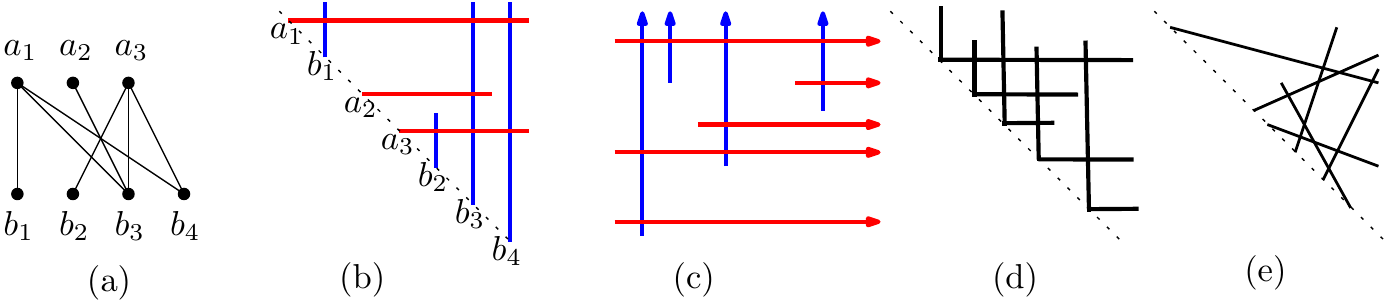} 
\caption{(a) A bipartite graph $G=(A\cup B,E)$. (b) A \Stick representation of $G$. (c)--(e) Illustration for different types of intersection representations.  
 (c) A \emph{2DOR} representation. (d) A \emph{Hook} representation. (e) A grounded segment representation. 
 }
\label{fig:intro}
\end{figure}

\Stick graphs lie between two well-studied classes of bipartite intersection graphs.
First of all, they are a subset of the \emph{grid intersection graphs} (GIG)~\cite{hartman1991grid}---intersection graphs of horizontal and vertical segments in the plane---which are NP-complete to recognize~\cite{kratochvil1994special}.
When all the horizontal segments extend rightward to infinity and the vertical segments extend upward to infinity, we obtain the subclass of \emph{2-directional orthogonal ray} (2DOR) graphs (e.g., see Fig.~\ref{fig:intro}(c)), which can be recognized in polynomial time~\cite{shrestha2010orthogonal}.
It is easy to show that 
every 2DOR graph is a \Stick 
graph---truncate each ray at a ground line placed above and to the right of every intersection point (and then flip the picture upside-down). 
Thus the class of \Stick graphs lies strictly between these two classes.

What the two classes (GIG and 2DOR) have in common is a nice characterization in terms of vertex orderings.
A bipartite graph $G$ with vertex bipartition $A \cup B$ can be represented as a \emph{bipartite adjacency matrix}, $M(G)$ with rows and columns corresponding to $A$ and
$B$, respectively, and a 1 in row $i$, column $j$, if $(i,j)$ is an edge. 
Both GIG graphs and 2DOR graphs can be characterized as graphs $G$ for which $M(G)$ has a row and column ordering without certain `forbidden' submatrices. (Details below.)  
Many other bipartite intersection graphs can be similarly characterized in terms of forbidden submatrices, see~\cite{klinz1995permuting}.

One of our main results is a similar characterization of \Stick graphs.
Specifically, we will prove that a bipartite graph $G$ with vertex bipartition $A \cup B$ has a  \Stick representation with vertices of $A$ corresponding to  horizontal segments and vertices of $B$ corresponding to the vertical segments if and only if there is an ordering of $A$ and an ordering of $B$ such that $M(G)$ has no submatrix of the following form, where $*$ stands for either 0 or 1:
\begin{center}
\mbox{\small \begin{blockarray}{[ccc]}
   $*$  & $1$ & $*$ \\
   $*$  & $0$ & $1$ \\
   $1$  & $*$ & $*$ \\
\end{blockarray}} \ \ \ \ \  %
{\arraycontrol \begin{blockarray}{[cc]}
   $1$ & $*$ \\
   $0$ & $1$ \\
   $1$ & $*$ \\
\end{blockarray}}\ \ \ \ \  %
{\arraycontrol \begin{blockarray}{[ccc]}
   $*$ & $1$ & $*$ \\
   $1$ & $0$ & $1$ \\
\end{blockarray}}  
\end{center}

Although this characterization does not (yet) give us a polynomial time algorithm to recognize \Stick graphs, it allows us to make some progress.
Given a bipartite graph $G$ with vertex bipartition $A \cup B$, we want to know if $G$ has a \Stick representation 
with $A$ and $B$ corresponding to horizontal and vertical segments, respectively.
It is easy to show that a solution
to this problem is 
completely determined by a total ordering $\sigma$ of the vertices of $G$ 
corresponding to the order (from left to right) in which the segments touch the ground line.
A natural way to tackle the recognition of \Stick graphs is as a hierarchy of problems, each (possibly) more difficult than the next:
\begin{enumerate}
\item[(i)] {\bf Fixed $A$s and $B$s}: In this case an ordering, $\sigma_a$, of the vertices in $A$ and an ordering, $\sigma_B$, of the vertices in $B$ 
are given, and the output ordering $\sigma$ must respect these given orderings.  Because of our forbidden submatrix characterization, this problem can be solved in polynomial time.
\item[(ii)] {\bf Fixed $A$s}: 
In this case only  the ordering $\sigma_A$ is given. 
\item[(iii)] {\bf General Stick graphs}:
In this case, neither $\sigma_A$ nor $\sigma_B$ is given, i.e.,  there is no restriction on the ordering of the vertices.
\end{enumerate}

\medskip\noindent{\bf Our Results:}
We give an algorithm with run-time $O(|A||B|)$ for problem (i).  This is faster than naively looking for the forbidden submatrices.  (And in fact, we use our algorithm  to prove the forbidden submatrix characterization).  Furthermore, the algorithm will find a \Stick representation when one exists.

We give an algorithm for problem (ii) with run time $O(|A|^3 |B|^3)$ that uses the forbidden submatrix characterization and reduces the problem to 2-Satisfiability. For problem (iii), recognizing \Stick graphs, we give some conditions that ensure a graph is a  \Stick graph, and some conditions that ensure a graph is not a \Stick graph.

\medskip\noindent{\bf Related Work:}
We now review the research related to the recognition of intersection graphs, in particular those that are bipartite.

 Interval graphs, i.e., intersection graph of horizontal intervals on the real line, can be recognized in linear time~\cite{booth1976testing, corneil2009lbfs}. 
 Bipartite interval graphs with a fixed bipartition are known as  \emph{interval bigraphs} (IBG)~\cite{muller1997recognizing, das1989interval}, and can be recognized in polynomial time~\cite{muller1997recognizing}. 
 In contrast to the interval graphs, no linear-time recognition algorithm is known for IBG.
 
Many bipartite graph classes have been  characterized in terms of forbidden submatrices of the graph's bipartite adjacency matrix, and a rich body of research examines when the rows and columns of a matrix can be permuted to avoid forbidden submatrices
~\cite{klinz1995permuting}.  For example, a graph $G$ is chordal bipartite if and only if $M(G)$ can be permuted to avoid the matrix $\gamma_1$ in Fig.~\ref{fig:matrices}~\cite{klinz1995permuting}, which led to a polynomial-time algorithm~\cite{DBLP:journals/siamcomp/Lubiw87}.   $G$ is a bipartite permutation graph if and only if  $M(G)$ can be permuted to avoid  $\gamma_1, \gamma_2$, and $\gamma_3$~\cite{DBLP:journals/networks/ChenY93}.

A graph is a \emph{two-directional orthogonal ray} (2DOR) graph if it admits an intersection representation of upward and rightward rays~\cite{soto2011jump, shrestha2010orthogonal}.   A graph is a 2DOR graph if and only if its incidence matrix admits a permutation of its rows and columns that avoids  $\gamma_1$ and $\gamma_2$~\cite{shrestha2010orthogonal}.

There is a linear-time algorithm to recognize 2DOR graphs~\cite{cogis1982ferrers, shrestha2010orthogonal}.
If there are 3 or 4 allowed directions for the rays, 
 then the graphs are called 3DOR or 4DOR graphs,  respectively. 
 Felsner {\em et al.}~\cite{DBLP:conf/mfcs/FelsnerMM13} showed that  if the direction (right, left, up, or down) for each vertex is given, then the existence of a 4DOR representation respecting the given directions can be decided in polynomial time. If the horizontal elements are segments and the vertical elements are rays, then the corresponding intersection graphs are called \emph{SegRay} graphs~\cite{chan2014exact, chaplick2015grid, chaplick2013stabbing,kalz2005orthogonal}. A graph $G$ is a SegRay graph if and only if $M(G)$ can be permuted to avoid $\gamma_4$~\cite{chaplick2014intersection}.

 \begin{figure}[tbp]
    \centering
$$    \gamma_1=\begin{bmatrix}
      1 & 0  \\
      1 & 1  \\
   \end{bmatrix}\ \ \  
   \gamma_2=\begin{bmatrix}
     1 & 0  \\
     0 & 1  \\
   \end{bmatrix}\ \ \ 
   \gamma_3=\begin{bmatrix}
     1 & 1  \\
     0 & 1  \\
   \end{bmatrix}\ \ \ 
   \gamma_4=\begin{bmatrix}
     1 & 0  & 1 \\
     $*$ & 1  & $*$ \\
   \end{bmatrix}\ \ \ 
   \gamma_5=\begin{bmatrix}
  $*$ & 1 & $*$ \\
  1 & 0 & 1 \\
  $*$ & 1 & $*$ \\
   \end{bmatrix}\ \ \ 
$$
    \caption{Forbidden submatrices, where $*$ stands for either 0 or 1.}
    \label{fig:matrices}
\end{figure}

The time-complexity questions for  3DOR, 4DOR and SegRay are all open.

The class of \emph{segment graphs} contains the graphs that can be represented as intersections of segments (with arbitrary slopes and  intersection angles).
 Every planar graph has a segment intersection representation~\cite{chalopin2009every}.
 Restricting to axis-aligned segments
 gives rise to \emph{grid intersection graphs} (GIG)~\cite{kratochvil1994special}.
 A bipartite graph is a GIG graph 
 if and only if its incidence matrix admits a permutation of its rows and columns that avoids
 $\gamma_5$~\cite{hartman1991grid}.
 If all the segments must have the same length, then the graphs are known as \emph{unit grid intersection graphs} (UGIG)~\cite{mustata2013unit}. The recognition problem is NP-complete for both GIG~\cite{kratochvil1994special} and UGIG~\cite{mustata2013unit}.
 We note that 4DOR is a subset of UGIG but Stick is not~\cite{chaplick2015grid}.

Researchers have examined further restrictions on GIG. For example, the graphs that admit a GIG representation with the additional constraint that all the segments must intersect (or be ``stabbed by'') a ground line form the \emph{stabbable grid intersection} (StabGIG) graph class~\cite{chaplick2015grid}. 

Another class of intersection graphs that restricts the objects on a ground line is 
defined in terms of \emph{hooks}.
A \emph{hook} consists of a center point on the ground line together with an incident vertical segment and horizontal segment above the ground line.
\emph{Hook} graphs are intersection graphs of hooks
~\cite{catanzaro2017max, hixon2013hook, soto2015pbox}, e.g., see Fig.~\ref{fig:intro}(d). 
Hook graphs are also known as \emph{max point-tolerance graphs}~\cite{catanzaro2017max} and \emph{heterozygosity graphs}~\cite{halldorsson2010clark}. The bipartite graphs that admit a Hook representation are called BipHook~\cite{chaplick2015grid}. 
 The complexities of recognizing the classes
StabGIG, BipHook, and \Stick are all  open~\cite{chaplick2015grid}. Chaplick {\em et al.}~\cite{chaplick2015grid} examined the containment relations of these graph classes.

\emph{Grounded segment representations} are a generalization of \Stick representations, where the segments can have arbitrary slopes, e.g., see Fig.~\ref{fig:intro}(e). Note that the segments are still restricted to lie on the same side of the ground line.  Cardinal {\em et al.}~\cite{DBLP:conf/wg/CardinalFMTV17} showed that the problem of deciding whether a graph admits a grounded segment representation is $\exists\mathbb{R}$-complete. We refer to~\cite{DBLP:journals/combinatorics/CabelloJ17,DBLP:conf/wg/CardinalFMTV17} for other related classes such as outersegment and  outerstring graphs, and for the study of their containment relations.

 The following table summarizes the time complexities of recognizing different classes of bipartite intersection graphs, where $n$ and $m$ are the sizes of the two vertex sets of the bipartition. 

\newcolumntype{P}[1]{>{\centering\arraybackslash}p{#1}}
\begin{table}[h]
\begin{minipage}{\textwidth}
\centering
\begin{tabular}{ | >{\centering\arraybackslash}m{7.6cm}|P{3cm}|P{1.2cm}|  }
 \hline
 \textbf{Graph Class} & \textbf{Time Complexity} & \textbf{Ref}\\
 \hline
  \raisebox{-1.5ex}{Chordal Bipartite Graphs} & $O((n+m)^2)$, or $|E| \log (n+m)$ & \raisebox{-1.5ex}{\cite{DBLP:journals/siamcomp/Lubiw87,DBLP:journals/ipl/Spinrad93}}\\\hline
Bipartite Permutation Graphs & $O(nm)$-time & \cite{DBLP:journals/dam/SpinradBS87} \\\hline
 2-Directional Ray Graphs (2DOR) & $O(nm)$-time
 & \cite{cogis1982ferrers, shrestha2010orthogonal} \\ \hline
3- or 4-Directional Ray Graphs (3DOR, 4DOR) & Open & \cite{cogis1982ferrers, shrestha2010orthogonal} \\\hline
4-DOR with given directions for vertices & $f(n,m)$-time\footnote{Multiplication time for two $(n+m)\times (n+m)$ matrices} & \cite{DBLP:conf/mfcs/FelsnerMM13} \\ \hline
3-DOR with a given bipartition $(A\cup B)$, and an ordering for $A$s, i.e., vertical rays &  $O((n+m)^2)$-time & \cite{DBLP:conf/mfcs/FelsnerMM13} \\ \hline
Grid Intersection Graphs (GIG) & NP-complete & \cite{kratochvil1994special}  \\ \hline
Unit Grid Intersection Graphs (UGIG) & NP-complete &   \cite{mustata2013unit} \\ \hline
Grounded Segment Intersection Graphs & $\exists\mathbb{R}$-complete &   \cite{DBLP:conf/wg/CardinalFMTV17} \\ \hline
StabGIG, SegRay,  Hook, BipHook and \Stick Graphs  & Open & \cite{chaplick2015grid,UdoHoffmann}\\ \hline 

 \hline
\end{tabular}
\end{minipage}
\end{table}

\section{Fixed $A$s and $B$s} 
\label{sec:AB-fixed}
In this section we study  \Stick representations of graphs with a fixed bipartition of the vertices and fixed vertex orderings for each vertex set. 
We call this problem \StickAB, defined formally as follows.

\begin{tcolorbox}[colframe=gray,colback=lightgray!20!white,boxrule=1pt,arc=0.4em,boxsep=-1mm]
 \textbf{Problem:} {\textsc{\Stick Representation with Fixed $A$s and $B$s (\StickAB)}}\\
 \textbf{Input:} A bipartite graph $G=(A{\cup}B,E)$, an ordering $\sigma_A$ of the vertices in $A$, and an ordering $\sigma_B$ of the vertices in $B$.\\
 \textbf{Question:}
 Does $G$ admit a \Stick representation 
 such that the $i$th horizontal segment on the ground line $\ell$ corresponds to the $i$th vertex of $\sigma_A$ and the $j$th vertical segment on $\ell$ corresponds to the $j$th vertex of $\sigma_B$? 
\end{tcolorbox}

 %
We first present an $O(|A||B|)$-time algorithm for \StickAB. 
A \Stick representation is totally determined by the order $\sigma$ of the segments' intersection with the ground line (details in the proof of Lemma~\ref{lem:simple}). 
Thus the idea of the algorithm is to impose some ordering constraints between the vertices of $A$ and $B$ 
based on some submatrices of the adjacency matrix of $G$. We show that the required \Stick representation exists if and only if there exists a total order $\sigma$ of $(A \cup B)$ that satisfies the constraints and preserves the given orderings $\sigma_A$ and $\sigma_B$. We now describe the details.

Assume that $\sigma_A = (a_1,\ldots, a_n)$ and  $\sigma_B = (b_1,\ldots, b_m)$. Let $M$ be the ordered bipartite adjacency matrix of $A$ and $B$, i.e., $M$ has rows $a_1,\ldots,a_n$ and columns $b_1,\ldots,b_m$,  where the entry $m_{i,p}$, i.e.,  the entry at the $i$th row and $p$th column, is 1 or 0 depending on whether $a_i$ and $b_p$ are adjacent or not, as illustrated in  Fig.~\ref{fig:simple}(a). 
 
We start with the constraints $a_{i-1} \prec a_i$, where  $2\le i\le n$, and $b_{p-1} \prec b_p$, where $ 2\le p\le m$ to enforce the given orderings $\sigma_A$ and  $\sigma_B$. We now add some more constraints, as follows.

\begin{enumerate}
\item[]\textbf{$C_1$:} If an entry $m_{i,p}$ is 1, then add the constraint $a_i \prec b_p$, e.g., see the black edges in  Fig.~\ref{fig:simple}(b). 

\item[]\textbf{$C_2$:} If $M$ contains an ordered submatrix 
\mbox{\scriptsize \begin{blockarray}{ccc}
& $b_p$ & $b_q$ \\
\begin{block}{c[cc]}
  {$a_i$\ } & 1 & $*$ \\
  {$a_j$\ } & 0 & 1 \\
\end{block}
\end{blockarray}} , %
then add the constraint $b_p \prec a_j$.   For example, see the gray edges in  Fig.~\ref{fig:simple}(b).
\end{enumerate}

We now test whether the set of constraints is consistent. Consider a directed graph $H$ with vertex set $(A \cup B)$, where each constraint corresponds to a directed edge (Fig.~\ref{fig:simple}(b)). Then the set of constraints is consistent if and only if $H$ is acyclic, and the following lemma claims that this occurs if and only if the graph admits a \Stick representation.

\begin{figure}[pt]
\centering
\includegraphics[width=.9\textwidth]{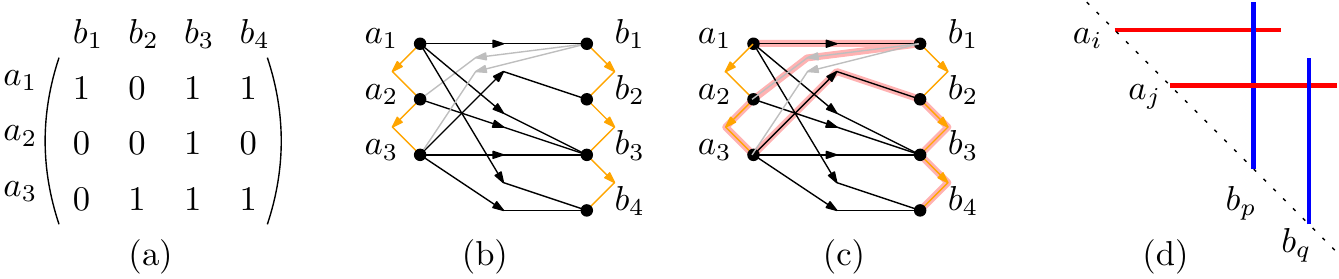} 
\caption{(a) The incidence matrix $M$ for the graph of Fig.~\ref{fig:intro}(a). 
 (b) The directed graph $H$. (c) A total order $(a_1,b_1,a_2,a_3,b_2,b_3,b_4)$ of the vertices in $H$. The corresponding \Stick drawing is in Fig.~\ref{fig:intro}(b). (d) Illustration of a forbidden ordering if $m_{j,p}=0$. }
\label{fig:simple}
\end{figure}

\begin{lemma}
\label{lem:simple}
$G$ admits a \Stick representation 
respecting $\sigma_A$ and $\sigma_B$ if and only if $H$ is acyclic, i.e., the constraints are consistent.
\end{lemma}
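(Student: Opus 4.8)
The plan is to prove both directions of the biconditional, connecting the combinatorial acyclicity condition on $H$ to the geometric existence of a \Stick representation. The key observation (claimed in the text preceding the lemma) is that a \Stick representation is completely determined by the total order $\sigma$ in which the segment endpoints touch the ground line $\ell$. So the whole proof reduces to understanding exactly which total orders $\sigma$ (refining $\sigma_A$ and $\sigma_B$) correspond to valid \Stick representations, and showing that these are precisely the topological orderings of $H$.

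**First I would** make the geometric-to-combinatorial correspondence precise. Given a ground line of slope $-1$, parametrize position along $\ell$ so that ``earlier'' means ``lower-left.'' A horizontal segment for $a_i$ starts at its ground point and extends rightward; a vertical segment for $b_p$ starts at its ground point and extends upward. The crucial geometric fact is: a horizontal segment $a_i$ and a vertical segment $b_p$ can possibly intersect only if $a_i$'s ground point lies before $b_p$'s ground point (i.e. $a_i \prec b_p$ in $\sigma$), because a rightward horizontal segment can only reach vertical segments whose feet are to its right, and an upward vertical segment can only reach horizontal segments whose feet are below it --- and on a slope $-1$ line these two ``to the right / below'' conditions coincide with $a_i \prec b_p$. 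This justifies constraint $C_1$: every edge $(a_i,b_p)$ forces $a_i \prec b_p$. Conversely, once the order $\sigma$ is fixed, I would show that the segment lengths can always be chosen to realize exactly the intersections dictated by $M$ subject to the order, \emph{provided} no conflicting requirement arises; the conflicts are exactly what $C_2$ captures.

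**Next I would** justify constraint $C_2$ as the ``non-edge'' obstruction. Consider rows $a_i, a_j$ with $a_i \prec a_j$ and columns $b_p, b_q$ with $b_p \prec b_q$ forming the pattern $m_{i,p}=1$, $m_{j,p}=0$, $m_{j,q}=1$. Since $a_i \prec b_p$ (from the $1$) and $a_j \prec b_q$, the segment $a_j$ must extend far enough right to reach $b_q$, which lies beyond $b_p$. If $b_p$'s foot were after $a_j$'s foot ($a_j \prec b_p$), then $a_j$ reaching $b_q$ would force it to pass over $b_p$ and create an unwanted intersection, contradicting $m_{j,p}=0$ --- this is exactly the situation sketched in Fig.~\ref{fig:simple}(d). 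To avoid this we must have $b_p \prec a_j$, which is precisely the constraint $C_2$ adds. For the \emph{sufficiency} of these constraints, I would argue that if $\sigma$ is any topological order of $H$, one can assign concrete coordinates and lengths: place ground points in the order $\sigma$, and give each horizontal segment $a_i$ a length reaching exactly its last required neighbor $b_q$ (the largest column with $m_{i,q}=1$), symmetrically for vertical segments. The $C_1$ constraints guarantee every required intersection is geometrically reachable, and the $C_2$ constraints guarantee that no segment overshoots a foot it should not meet, so all non-edges are respected.

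**The main obstacle I expect** is the sufficiency direction: showing that a topological order of $H$ really yields a representation with \emph{no spurious intersections}. It is easy to ensure all \emph{required} edges are realized (make segments long enough), but one must simultaneously guarantee that making $a_i$ long enough to hit its farthest true neighbor does not accidentally cross a vertical segment $b_p$ for which $m_{i,p}=0$. The heart of the argument is that any such accidental crossing would exhibit precisely a $C_2$-type configuration (a $1$ in row $a_i$ to the right of a $0$, with the offending $b_p$ sitting between $a_i$'s foot and that $1$), and the corresponding $C_2$ constraint would have forced $b_p$ to come before $a_i$ --- contradicting that $b_p$ lies in the ``interior'' of $a_i$'s span. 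Thus I would prove the contrapositive: any spurious intersection forced by the order $\sigma$ violates one of the ordering constraints, so a valid topological order of $H$ is crossing-clean. Combining both directions, $G$ has a \Stick representation respecting $\sigma_A,\sigma_B$ iff the constraint system is satisfiable iff $H$ is acyclic.
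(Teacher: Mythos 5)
Your proposal is correct and follows essentially the same route as the paper: necessity of $C_1$ and $C_2$ via the same geometric argument, and sufficiency by taking a topological order of $H$, extending each horizontal segment to its farthest required neighbor (and each vertical segment symmetrically to its nearest one), then deriving a contradiction from any spurious crossing. The only nuance worth noting is that when a spurious crossing of $a_j$ with $b_p$ is analyzed, the full $C_2$ pattern needs \emph{both} the $1$ to the right of the $0$ in row $a_j$ (from $\max_B(a_j)$) \emph{and} the $1$ above the $0$ in column $b_p$ (from $\min_A(b_p)$); your sketch states only the former explicitly, but your symmetric length assignment supplies the latter, exactly as in the paper's proof.
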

\begin{proof}
We first show that the constraints are necessary. 
Every constraint between two vertices of the same set is implied by $\sigma_A$ or  $\sigma_B$. 
For Condition $C_1$, observe that a horizontal segment $a_i$ can intersect a vertical segment $b_p$ only if $a_i$ precedes $b_p$, i.e., we must have $a_i \prec b_p$. For Condition $C_2$, we already have $a_i\prec a_j, b_p\prec b_q, a_i\prec b_p, a_j\prec b_q$.  If we assume that $a_j\prec b_p$,  then we have $a_i \prec a_j\prec b_p \prec b_q$, and to reach the vertical segment $b_q$,  $a_j$  would intersect $b_p$.  Since $m_{j,p} = 0$,  this intersection is forbidden. Fig.~\ref{fig:simple}(d) illustrates this scenario. Therefore,  we must have the constraint $b_p \prec a_j$.
Since all the constraints are necessary, if $G$ admits an intersection representation,  then the set of constraints is consistent.

We now prove the converse.  Suppose the set of constraints is consistent.  Take a total order of $A \cup B$ which is consistent with all the constraints, e.g., see  Fig.~\ref{fig:intro}(c).
This is a ``topological order'' of $H$.
Initiate the drawing of the corresponding orthogonal segments in this order on the ground line $\ell$. 
This determines the $y$-coordinate of every $a \in A$ and the $x$-coordinate of every $b \in B$.
For each vertex $a\in A$, let $\max_B(a)$ be the neighbor of $a$ in $G$ with the largest index. We extend the horizontal segment corresponding to $a$ to the right until the  $x$-coordinate of $\max_B(a)$. 
Similarly, for each vertex $b \in B$, let $\min_A(b)$ be the neighbor of $b$ in $G$ with the minimum index.  We extend the vertical segment corresponding to $b$ upward until the $y$-coordinate of $\min_A(b)$. 

We must show that the resulting drawing does not contain any forbidden intersection.
Suppose by contradiction that the  segments of $a_j$ and $b_p$ intersect, but they are not adjacent in $G$, i.e., $m_{j,p}=0$. We now have $a_j \prec b_p$, and the entries $b_q=\max_B(a_j)$ and $a_i=\min_A(b_p)$ give the submatrix 
described in Condition $C_2$, thus the constraint $b_p \prec a_j$ applies,  a contradiction. 
\end{proof}

An algorithm to solve \StickAB follows immediately, and can be implemented in linear time in the size of the adjacency matrix $M$.
\begin{theorem}\label{thm:fafb}
There is an $O(|A||B|)$-time algorithm to decide the \StickAB problem, and construct a \Stick representation if one exists.
\end{theorem}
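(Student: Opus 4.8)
The plan is to let Lemma~\ref{lem:simple} do all of the combinatorial work and to concentrate entirely on the running time, showing that each of the three steps---building the directed graph $H$, testing it for acyclicity, and converting a topological order into a drawing---can be carried out in $O(|A||B|)$ time. By that lemma it suffices to decide whether $H$ is acyclic and, in the acyclic case, to extract a topological order and apply the construction given in the proof of the lemma.

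The only delicate point is constructing $H$ without literally enumerating submatrices. The chain constraints enforcing $\sigma_A$ and $\sigma_B$ contribute $O(|A|+|B|)$ edges, and the $C_1$ constraints contribute one edge per $1$-entry of $M$, hence $O(|A||B|)$ edges, all obtainable in a single scan of $M$. The danger lies in $C_2$: searching directly for every ordered submatrix of the forbidden $2\times 2$ shape would cost $\Theta(|A|^2|B|^2)$. I would avoid this by observing that the constraint contributed by $C_2$ depends only on the pair $(a_j,b_p)$ sitting at the bottom-left $0$ of the submatrix. Concretely, the edge $b_p \prec a_j$ is forced exactly when $m_{j,p}=0$ and there exist both an index $i<j$ with $m_{i,p}=1$ and an index $q>p$ with $m_{j,q}=1$; these two existence conditions are independent, and they are precisely $\min_A(b_p) < j$ and $\max_B(a_j) > p$ in the notation already introduced in the lemma. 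Thus, after one pass computing $\min_A(b_p)$ for every $p$ and $\max_B(a_j)$ for every $j$ (total $O(|A||B|)$), each of the $|A||B|$ pairs can be tested in $O(1)$, and $H$ is assembled with only $O(|A||B|)$ edges.

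With $H$ in hand the remaining steps are routine. The graph $H$ has $O(|A|+|B|)$ vertices and $O(|A||B|)$ edges, so a standard linear-time topological sort either detects a cycle---in which case we answer ``no'' by Lemma~\ref{lem:simple}---or returns a total order $\sigma$ consistent with all constraints. From $\sigma$ I would assign to each $a\in A$ the $y$-coordinate given by its rank among the $A$-vertices of $\sigma$, and to each $b\in B$ the $x$-coordinate given by its rank among the $B$-vertices; then extend each horizontal segment rightward to the $x$-coordinate of $\max_B(a)$ and each vertical segment upward to the $y$-coordinate of $\min_A(b)$, exactly as in the lemma. All of this is $O(|A||B|)$, and correctness of the output representation is precisely the converse direction already established there.

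I expect the linear-time computation of the $C_2$ edges to be the crux, since that is where a naive reading of the definition would inflate the cost to fourth degree in the dimensions of $M$; every other ingredient is a direct invocation of Lemma~\ref{lem:simple} together with off-the-shelf linear-time graph routines for topological ordering and coordinate assignment.
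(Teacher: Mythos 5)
Your treatment of the decision part is correct and is essentially the paper's own: the paper computes the $C_2$ edges by flagging, in one right-to-left scan per row, every $0$ with a $1$ to its right, and in one bottom-to-top scan per column, every $0$ with a $1$ above it, adding the edge $b_p\prec a_j$ exactly when both flags are set at $m_{j,p}$. Your test ``$\min_A(b_p)<j$ and $\max_B(a_j)>p$'' is the same characterization in different clothing (modulo the trivial convention for vertices with no neighbours), and everything else---$O(|A|+|B|)$ chain edges, one edge per $1$-entry, linear-time topological sort---matches the paper's proof.

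However, your construction step contains a genuine error. In Lemma~\ref{lem:simple} the feet of the segments are placed on the ground line \emph{in the order $\sigma$}, so the $y$-coordinate of each $a\in A$ and the $x$-coordinate of each $b\in B$ are determined by the vertex's overall rank in $\sigma$; the correctness argument (the converse direction of the lemma) uses precisely the fact that the order of the feet along the ground line equals $\sigma$. Your assignment---$y$ from the rank of $a$ among the $A$-vertices and $x$ from the rank of $b$ among the $B$-vertices---discards the interleaving of $A$ and $B$ in $\sigma$, which is the very information the topological sort was computed to provide, and the lemma's correctness proof does not apply to it. Concretely, take $M=\begin{bmatrix}1&1&1&0\\0&0&0&1\end{bmatrix}$ with rows $a_1,a_2$ and columns $b_1,\dots,b_4$. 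Here $C_2$ forces $b_3\prec a_2$ (the $0$ at $m_{2,3}$ has $m_{1,3}=1$ above and $m_{2,4}=1$ to its right), and $\sigma=(a_1,b_1,b_2,b_3,a_2,b_4)$ is a valid topological order. Under your coordinates, $a_2$ has foot at ground-line position $2$ (its $A$-rank) and extends right to the position of $\max_B(a_2)=b_4$, i.e.\ position $4$, while $b_3$ sits at position $3$ (its $B$-rank) and extends up to the height of $\min_A(b_3)=a_1$, i.e.\ position $1$; these two segments cross even though $m_{2,3}=0$ and $\sigma$ places $b_3$ before $a_2$. With the paper's placement (feet at overall $\sigma$-ranks $1,\dots,6$) the foot of $a_2$ lies at position $5$, strictly after $b_3$ at position $4$, and no forbidden crossing occurs. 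The fix is simple---use the overall $\sigma$-rank for the foot positions, as the lemma does---but as written your construction produces invalid representations.
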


\begin{proof}
The algorithm was given above: We construct the directed graph $H$ from the 0-1 matrix $M$ and test if $H$ is acyclic.  This correctly decides \StickAB by Lemma~\ref{lem:simple}.  Furthermore, if $H$ is acyclic, then we can construct a \Stick representation as specified in the proof of Lemma~\ref{lem:simple}. Pseudocode for the Algorithm~\ref{algo1} is given in Appendix~\ref{app:algo}. 

The matrix $M$ has 
size $O(nm)$ where $n=|A|$ and $m=|B|$, and the graph  $H$ has $n+m$ vertices and $O(nm)$ edges. 
We can test acyclicity of a graph and find a topological ordering in linear time.  
Also, the construction of the \Stick representation is clearly doable in linear time.  

Thus we only need to give details on constructing $H$ in time $O(nm)$.
We can construct the edges of $H$ that correspond to $\sigma_A$ and $\sigma_B$ in time $O(n+m)$.  The edges arising from constraints $C_1$ correspond to the 1's in the matrix $M$, so we can construct them in $O(nm)$ time.
The edges arising from constraints $C_2$ correspond to some of the 0's in the matrix $M$.  Specifically, a 0 in position $m_{j,p}$ gives a $C_2$ constraint $b_p \prec a_j$
if and only if there is a 1 in row $j$ to the right of the 0 and a 1 in column $p$ above the 0.  
We can flag the 0's that have a 1 to their right by scanning each row of $M$ from right to left.  Similarly, we can flag the 0's that have a 1 above them by scanning each column of $M$ from bottom to top.  These scans take time $O(nm)$.  Finally, if a 0 in $M$ has both flags, then we add the corresponding edge to $H$.
The total time is $O(nm)$.
\end{proof}

Lemma~\ref{lem:simple} also yields 
a forbidden submatrix characterization for \StickAB.

\begin{theorem}\label{thm:key}
An instance of \StickAB with graph $G=(A \cup B, E)$ has a solution if and only if $G$'s ordered adjacency matrix $M$ has no ordered submatrix of the following form:

\begin{center}
$P_1 =$ {\arraycontrol \begin{blockarray}{cccc}
& $b_p$ & $b_q$ & $b_r$\\
\begin{block}{c[ccc]}
  {$a_i$\ } & $*$  & $1$ & $*$ \\
  {$a_j$\ } & $*$  & $0$ & $1$ \\
  {$a_k$\ } & $1$  & $*$ & $*$ \\
\end{block}
\end{blockarray}} , %
$P_2=${\arraycontrol \begin{blockarray}{ccc}
& $b_p$ & $b_q$ \\
\begin{block}{c[cc]}
  {$a_i$\ } & $1$ & $*$ \\
  {$a_j$\ } & $0$ & $1$ \\
  {$a_k$\ } & $1$ & $*$ \\
\end{block}
\end{blockarray}} , %
$P_3=${\arraycontrol \begin{blockarray}{cccc}
& $b_p$ & $b_q$ & $b_r$\\
\begin{block}{c[ccc]}
  {$a_i$\ } & $*$ & $1$ & $*$ \\
  {$a_j$\ } & $1$ & $0$ & $1$ \\
\end{block}
\end{blockarray}} . 
\end{center}
\end{theorem}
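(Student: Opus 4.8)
The plan is to reduce the statement to Lemma~\ref{lem:simple}: since an instance of \StickAB\ has a solution exactly when the digraph $H$ is acyclic, it suffices to prove that $H$ contains a directed cycle if and only if the ordered matrix $M$ contains one of $P_1, P_2, P_3$ as a submatrix. First I would record the structure of $H$. Every edge either increases the index within $A$ (from $\sigma_A$), increases the index within $B$ (from $\sigma_B$), goes from $A$ to $B$ via a $C_1$ constraint at a $1$-entry, or goes from $B$ to $A$ via a $C_2$ constraint. In particular, the only $A\to B$ edges are $C_1$-edges and the only $B\to A$ edges are $C_2$-edges, and runs within a side are strictly index-increasing.

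For the easy direction (a forbidden submatrix forces a cycle) I would exhibit a short cycle for each pattern. For $P_2$ (rows $a_i,a_j,a_k$, columns $b_p,b_q$), the entries $m_{i,p}{=}1, m_{j,p}{=}0, m_{j,q}{=}1$ create a $C_2$-edge $b_p\to a_j$, which together with the order edge $a_j\to a_k$ and the $C_1$-edge $a_k\to b_p$ (from $m_{k,p}{=}1$) closes a triangle $b_p\to a_j\to a_k\to b_p$. The same bookkeeping gives $b_q\to a_j\to b_p\to b_q$ for $P_3$ and $b_q\to a_j\to a_k\to b_p\to b_q$ for $P_1$. Hence each $P_i$ destroys acyclicity, ruling out a solution.

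The substantive direction is the converse. A directed cycle cannot remain inside $A$ or inside $B$, so it alternates maximal index-increasing runs in $A$ and in $B$ joined by some $k\ge 1$ many $C_1$-edges $a_{s_\ell}\to b_{t_\ell}$ and exactly $k$ many $C_2$-edges $b_{u_\ell}\to a_{v_\ell}$. Traversing the cycle yields, cyclically in $\ell$, the monotonicity inequalities $t_\ell\le u_\ell$ (the $B$-run) and $v_\ell\le s_{\ell+1}$ (the $A$-run). The crux is to extract a single $C_1$-edge $a_s\to b_t$ and a single $C_2$-edge $b_u\to a_v$ with $t\le u$ and $v\le s$, which already rebuild a one-crossing cycle $a_s\to b_t\to^* b_u\to a_v\to^* a_s$. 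I would obtain such a pair by contradiction: if no pair satisfied both $t_a\le u_b$ and $v_b\le s_a$, then specializing to the index pair $(a,b)=(\ell{+}1,\ell)$—where $v_\ell\le s_{\ell+1}$ already holds—would force $t_{\ell+1}>u_\ell\ge t_\ell$ for every $\ell$, i.e.\ $t_1<t_2<\cdots<t_k<t_1$, which is impossible. I expect this cyclic-monotonicity compression to be the main obstacle, since it is exactly where the global structure of an arbitrary-length cycle collapses to the two edges that witness a forbidden pattern.

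Finally, with one $C_1$-edge giving $m_{s,t}{=}1$ and one $C_2$-edge giving some $i<v$ with $m_{i,u}{=}1$ and some $w>u$ with $m_{v,u}{=}0$ and $m_{v,w}{=}1$, the extracted inequalities arrange the rows as $i<v\le s$ and columns as $t\le u<w$. I would close with a short case analysis on coincidences of the outer indices. The case $v=s,\ t=u$ is vacuous, since it forces the single entry $m_{s,t}=m_{v,u}$ to be both $1$ and $0$. The remaining cases $v<s, t<u$; $v<s, t=u$; and $v=s, t<u$ place the known entries on rows $\{i,v,s\}$ and columns $\{t,u,w\}$ precisely into the shapes $P_1$, $P_2$, and $P_3$, respectively. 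Combining both directions with Lemma~\ref{lem:simple} then yields the claimed characterization.
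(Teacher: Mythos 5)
Your proposal is correct and follows essentially the same route as the paper: reduce via Lemma~\ref{lem:simple} to showing $H$ has a cycle iff $M$ contains some $P_i$, exhibit the same explicit short cycles for the forward direction, and for the converse extract a $C_2$-edge whose next $C_1$-edge re-enters $B$ weakly to the left, then split on the coincidences $v=s$ and $t=u$ exactly as the paper splits on $j=k$ and $p=q$ to land in $P_3$, $P_2$, or $P_1$. The only (cosmetic) difference is how the descent pair is located: you derive it by contradiction from cyclic monotonicity of the $t_\ell$'s, whereas the paper obtains it immediately by choosing the rightmost $B$-vertex of the cycle in $\sigma_B$, whose outgoing edge is forced to be a $C_2$-edge.
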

\noindent
Observe that $P_2$ and $P_3$ are special cases of $P_1$ with $p{=}q$ and $j{=}k$, respectively.

\begin{proof}
We will use the graph $H$  that we  constructed above and used in Lemma~\ref{lem:simple}. 
By Lemma~\ref{lem:simple}, the theorem statement is equivalent to the statement that $M$ has a submatrix $P_1, P_2$ or $P_3$ if and only if $H$ has a directed cycle.

We first show that if the matrix $M$ has one of the ordered submatrices $P_1,P_2,P_3$ then $H$ has a directed cycle.  
For $P_1$, the cycle in $H$ is $a_k\prec b_p $ (by $C_1$), $b_p\prec b_q$ (by $\sigma_B$), $b_q\prec a_j$ (by $C_2$), $a_j\prec a_k$ (by $\sigma_A$). 
For $P_2$, the cycle is $b_p\prec a_j $ (by $C_2$), $a_j\prec a_k$ (by $\sigma_A$), $a_k\prec b_p$ (by $C_1$). 
For $P_3$, the cycle is $b_q\prec a_j $ (by $C_2$), $a_j\prec b_p$ (by $C_1$), $b_p\prec b_q$ (by $\sigma_B$).

To prove the other direction, suppose that $H$ has a directed cycle $O$.  We will show that $M$ has one of the submatrices 
$P_1,P_2,P_3$. 
Let $b_q$ be the rightmost vertex of $O$ in $\sigma_B$, and let $(b_q, z)$ be the outgoing edge of $b_q$ in $O$.   
Since $b_q$ is the rightmost vertex of $O$ in $\sigma_B$, $z$ must be a vertex $a_j$ of $A$. The constraint $b_q\prec a_j$ can only be added by $C_2$. Therefore, we must have the configuration 
\mbox{\scriptsize \begin{blockarray}{ccc}
& $b_q$ & $b_r$ \\
\begin{block}{c[cc]}
  {$a_i$\ } & 1 & * \\
  {$a_j$\ } & 0 & 1 \\
\end{block}
\end{blockarray}} . %
The path   can now continue from $a_j$ following zero or more $A$ vertices, but to complete the cycle, it eventually needs to reach a vertex $b_p$ of $B$. Since $b_q$ is the rightmost in $\sigma_B$, $b_p$ must appear either  before $b_q$ or coincide with $b_q$. 
First suppose that $b_p\not = b_q$. If the outgoing edge of $a_j$ is $(a_j,b_p)$, then we obtain the configuration $P_3$. Otherwise, the path visits several vertices of $A$ and then visits $b_p$, and we thus obtain the configuration $P_1$. 

Suppose now that  $b_p = b_q$. In this case the outgoing edge of $a_j$ cannot be $(a_j,b_p)$, because such an edge can only be added by $C_1$, which would imply $m_{j,p}=m_{j,q}=1$, violating the configuration above.  If  the path visits several vertices of $A$ and then visits $b_p(=b_q)$, then there must be a 1 
in the $q$th column below the $j$th row. We thus obtain the configuration $P_2$. 
\end{proof}

\subsubsection{Bipartite Graphs Representable for All Orderings: } 
The above forbidden submatrix characterization allows us to characterize the bipartite graphs $G=(A\cup B,E)$ that have a \Stick representation for \emph{every} possible ordering of $A$ and $B$.  Observe that the forbidden submatrices $P_2, P_3, P_1$ correspond, respectively, to the bipartite graphs shown in 
Fig.~\ref{fig:ccfab}(a)--(c).  
We can construct $2^2=4$ graphs from Fig.~\ref{fig:ccfab}(a) based on whether each of the dotted edges is present or not.
Similarly, we can construct $2^2=4$ graphs from Fig.~\ref{fig:ccfab}(b), and  $2^5=32$ graphs from Fig.~\ref{fig:ccfab}(b). Let $\mathcal{H}$ be the set that consists of these 40 graphs. 
From Theorem~\ref{thm:key} we immediately obtain:

\begin{figure}[pt]
\centering
\includegraphics[width=.5\textwidth]{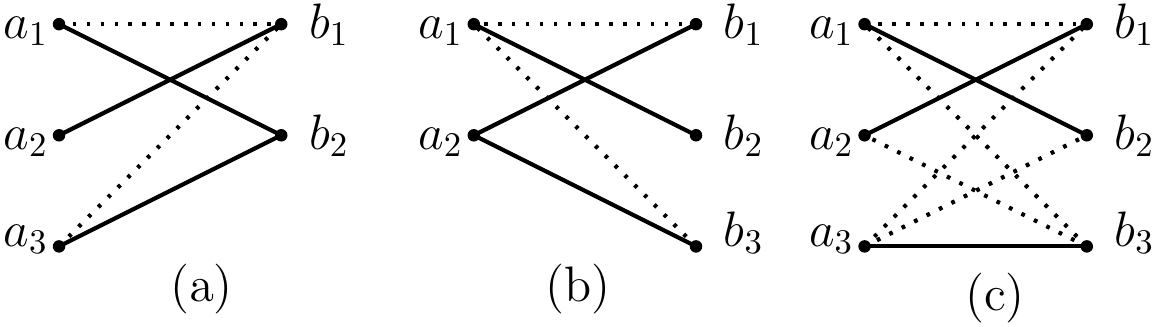}
  \caption{The forbidden subgraphs for Theorem~\ref{thm:ccfab}. Dotted edges are optional. 
  }
\label{fig:ccfab}
\end{figure}

\begin{theorem}\label{thm:ccfab}
A bipartite graph $G=(A\cup B,E)$ admits a \Stick representation for every possible ordering of $A$ and $B$ if and only if $G$ does not contain any graph of $\mathcal H$.
\end{theorem}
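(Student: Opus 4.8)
The plan is to derive Theorem~\ref{thm:ccfab} directly from the forbidden submatrix characterization of Theorem~\ref{thm:key}. The key observation is that ``$G$ admits a \Stick representation for \emph{every} ordering of $A$ and $B$'' is equivalent, by Theorem~\ref{thm:key}, to saying that for every ordering the ordered matrix $M$ avoids $P_1, P_2, P_3$; contrapositively, $G$ \emph{fails} to be representable under some ordering if and only if there is \emph{some} ordering under which $M$ \emph{contains} one of $P_1, P_2, P_3$. So the quantifier ``for every ordering'' on the representation side flips to ``there exists an ordering'' on the forbidden-submatrix side. First I would make this translation explicit: $G$ is representable for all orderings $\iff$ no ordering produces a forbidden submatrix.

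Next I would reinterpret the existence of an ordered submatrix as an \emph{unordered} subgraph condition. An ordered submatrix $P_1$, $P_2$, or $P_3$ specifies certain entries that must be $1$ (edges), certain entries that must be $0$ (non-edges), and the starred entries that may be either. Since we are free to choose the orderings of $A$ and $B$, the only thing that matters is whether the underlying \emph{set} of rows and columns can be chosen and arranged to realize the required pattern of forced $1$'s and $0$'s. Concretely, there exists an ordering under which $M$ contains the submatrix $P_t$ if and only if $G$ contains, as a labeled bipartite subgraph on the relevant vertices, the bipartite graph obtained by reading $P_t$'s forced entries as edges and forced-$0$ entries as forbidden (absent) edges, with the starred entries giving the optional (dotted) edges. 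This is exactly the content of Fig.~\ref{fig:ccfab}: panel (a) is the subgraph for $P_2$ (two optional edges, hence $2^2=4$ graphs), panel (b) is the subgraph for $P_3$ (two optional edges, $2^2=4$ graphs), and panel (c) is the subgraph for $P_1$ (five starred entries, hence $2^5=32$ graphs). The total is $4+4+32=40$ graphs, which is the set $\mathcal H$.

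The remaining step is to verify that ``some ordering realizes one of $P_1,P_2,P_3$'' is equivalent to ``$G$ contains some member of $\mathcal H$ as a subgraph,'' where containment is order-independent. The forward direction is immediate: if an ordering yields a forbidden submatrix, then the rows and columns involved, together with their forced edges and non-edges, form a subgraph isomorphic to one of the $40$ graphs, so $G$ contains a member of $\mathcal H$. For the converse I would argue that if $G$ contains some $H'\in\mathcal H$ as a subgraph, then I can \emph{construct} an ordering exhibiting the corresponding forbidden pattern: place the vertices of $H'$ in the column/row positions dictated by $P_t$ and extend arbitrarily to the rest of $A$ and $B$. The one point requiring care is that the starred/optional edges in $P_t$ are ``don't care'' entries, so any actual adjacency or non-adjacency among those positions is compatible with the pattern; this is why each star doubles the number of graphs rather than imposing a constraint, and it is what makes the subgraph-containment condition clean.

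The main obstacle I anticipate is bookkeeping rather than mathematical depth: ensuring that the correspondence between the starred entries of each $P_t$ and the dotted (optional) edges of the figure is stated precisely, and confirming the counts $4$, $4$, $32$ by checking that the forced $1$'s and $0$'s are held fixed while exactly the starred positions range freely. In particular I would double check that no two of the $40$ graphs collapse to the same bipartite graph (so that the count is honest) and that $P_2$ and $P_3$, being the special cases of $P_1$ with $p{=}q$ and $j{=}k$ noted after Theorem~\ref{thm:key}, are genuinely needed as separate figure panels rather than being subsumed. Once the translation of quantifiers and the ordered-to-unordered dictionary are pinned down, the theorem follows immediately from Theorem~\ref{thm:key}, so the proof is essentially a careful restatement.
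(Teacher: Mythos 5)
Your proposal is correct and takes essentially the same route as the paper, which also obtains Theorem~\ref{thm:ccfab} immediately from Theorem~\ref{thm:key} by translating the ordered submatrices $P_1,P_2,P_3$ into the $4+4+32=40$ bipartite graphs of Fig.~\ref{fig:ccfab} (stars become optional edges, forced $0$'s become required non-edges, and the quantifier over orderings flips from ``for all'' to ``exists''). Your additional care about side-respecting, non-edge-preserving containment and about constructing an ordering from a contained $H'\in\mathcal H$ is exactly the (unstated) content behind the paper's ``immediately obtain.''
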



\section{Fixed As}
\label{sec:A-fixed}
 
In this section we study the \Stick representation problem when the ordering of only the vertices in $A$ is given. A formal description of the problem, which we call \StickA, is as follows.
\begin{tcolorbox}[colframe=gray,colback=lightgray!20!white,boxrule=1pt,arc=0.4em,boxsep=-1mm]
 \textbf{Problem:}  {\textsc{\Stick Representation with Fixed As (\StickA)}}\\
 \textbf{Input:} A bipartite graph $G = (A{\cup}B,E)$, and a vertex-ordering $\sigma_A$ of  $A$.\\
 \textbf{Question:}
 Does $G$ admit a \Stick representation such that the $i$th horizontal segment on the ground line corresponds to the $i$th vertex of $\sigma_A$?
\end{tcolorbox}

We give a polynomial-time algorithm for \StickA.
The idea is to use the forbidden submatrix characterization for \StickAB (Theorem~\ref{thm:key}).
We need an ordering of the $B$ vertices that, together with the given ordering $\sigma_A$, avoids the forbidden submatrices $P_1,P_2,P_3$.
We will express the conditions for the ordering of the $B$ vertices as a 2-SAT formula, i.e., a CNF (conjunctive normal form) formula where each clause contains at most two literals.
2-SAT can be solved in polynomial time~\cite{DBLP:journals/ipl/AspvallPT79}.

\begin{theorem}\label{thm:fa}
There is an algorithm with run-time $O(|A|^3|B|^3)$ to decide the \StickA problem, and construct a \Stick representation if one exists.
\end{theorem}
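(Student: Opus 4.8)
The plan is to exploit the forbidden–submatrix characterization of Theorem~\ref{thm:key}. Since $\sigma_A$ is fixed, the rows of the ordered adjacency matrix $M$ are fixed, so solving \StickA is exactly the problem of choosing an ordering of the columns (the vertices of $B$) so that the resulting matrix contains none of $P_1,P_2,P_3$. I would encode the existence of such a column ordering as a 2-SAT instance and then solve it in time linear in its size~\cite{DBLP:journals/ipl/AspvallPT79}, finally realizing the chosen order via Theorem~\ref{thm:fafb}.

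First I would introduce, for each unordered pair $\{b_p,b_q\}$ of $B$-vertices, a Boolean variable $x_{pq}$ read as ``$b_p \prec b_q$'' (its negation meaning $b_q \prec b_p$); using a single variable per pair builds in antisymmetry and totality automatically. The key observation is that each forbidden pattern pins down one \emph{specific} bad relative order of two or three columns, and forbidding a fixed order of a short chain needs at most two literals. Concretely, a column triple that together with some row triple (resp.\ row pair) of $\sigma_A$ would realize $P_1$ (resp.\ $P_3$) in the order $b_p \prec b_q \prec b_r$ contributes the clause $(\lnot x_{pq}\vee \lnot x_{qr})$, which simply forbids that chain, while a column pair realizing $P_2$ in the order $b_p \prec b_q$ contributes the unit clause $(\lnot x_{pq})$, forcing $b_q \prec b_p$. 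Because $P_2,P_3$ are degenerate cases of $P_1$, I would generate all clauses by enumerating every candidate row-triple and column-triple and testing the relevant $0$/$1$ entries of $M$; this enumeration costs $O(|A|^3|B|^3)$ and dominates the running time, producing $O(|A|^3|B|^3)$ clauses over $O(|B|^2)$ variables.

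The step I expect to be the main obstacle is guaranteeing that a satisfying assignment corresponds to an \emph{actual total order} of $B$, i.e.\ that the chosen pairwise relations are transitive: the transitivity axiom $x_{pq}\wedge x_{qr}\Rightarrow x_{pr}$ is a three-literal clause and is not admissible in 2-SAT. One natural attempt to anchor transitivity is to add threshold variables $z_{p,i}$ meaning ``$b_p \prec a_i$'', whose monotonicity clauses $(\lnot z_{p,i}\vee z_{p,i+1})$ are genuine two-literal clauses \emph{precisely because} $\sigma_A$ is fixed, so each $b_p$ acquires a well-defined position among the $A$-vertices and the $C_1$ constraints become unit clauses $\lnot z_{p,i}$ for every edge $m_{i,p}=1$. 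However, linking the $z_{p,i}$ back to the $x_{pq}$ (``$b_p\prec a_i\prec b_q$ implies $b_p\prec b_q$'') reintroduces three-literal clauses, so this does not by itself settle the matter. The real technical heart is therefore a combinatorial \emph{exchange argument}: I would show that any satisfying assignment can be repaired into a consistent total order — breaking any cyclic triple by locally flipping comparisons without activating a pattern clause — and conversely that if no valid column order exists then the 2-SAT instance is unsatisfiable. Establishing this equivalence cleanly is where the bulk of the proof effort lies.

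Finally, once a satisfiable assignment yields an ordering $\sigma_B$, the pair $(\sigma_A,\sigma_B)$ is a valid \StickAB instance whose matrix avoids $P_1,P_2,P_3$, so I would invoke Theorem~\ref{thm:fafb} to construct the \Stick representation in $O(|A||B|)$ time. As 2-SAT is decidable in time linear in the formula size, the overall cost is dominated by clause generation, giving the claimed $O(|A|^3|B|^3)$ bound.
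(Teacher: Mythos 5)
Your plan is, up to notation, the same as the paper's proof: the paper likewise creates order variables $p_{v\prec w}$ with antisymmetry clauses, presets the $A$--$A$ variables from $\sigma_A$, adds for each triple $b_p,b_q,b_r$ realizing $P_1$ (or $P_3$) the clauses $(\neg p_{b_q\prec b_r}\vee p_{b_q\prec b_p})$ and $(\neg p_{b_p\prec b_q}\vee p_{b_r\prec b_q})$ --- which, given antisymmetry, are each logically identical to your single clause $(\neg x_{pq}\vee\neg x_{qr})$ --- sets the unit constraint $p_{b_q\prec b_p}$ for each pair realizing $P_2$, solves the 2-SAT in linear time~\cite{DBLP:journals/ipl/AspvallPT79}, and constructs the representation from $(\sigma_A,\sigma_B)$ via Theorems~\ref{thm:key} and~\ref{thm:fafb}, with the $O(|A|^3|B|^3)$ cost dominated by enumerating the patterns. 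Your one-variable-per-pair simplification and your restriction of variables to $B$--$B$ pairs are harmless cosmetic differences.

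The substantive issue is the step you yourself single out as ``the real technical heart'': a satisfying assignment of this formula is only a tournament on $B$ avoiding certain $2$-chains and respecting certain forced pairs, and nothing in a 2-SAT encoding enforces transitivity. Your proposed ``exchange argument'' that repairs a satisfying assignment into a total order is never carried out, so your proposal does not establish soundness of the reduction. The worry is concrete: if, say, three column pairs realized $P_2$-type forcings cyclically, the unit clauses $p_{b_2\prec b_1}$, $p_{b_3\prec b_2}$, $p_{b_1\prec b_3}$ would all be satisfiable simultaneously (by a cyclic tournament) even though no linear order of $B$ is consistent with them; to prove the theorem one must show that whenever no valid $\sigma_B$ exists, the formula is actually unsatisfiable --- a nontrivial structural claim about how the $1,0,1$ patterns interact, not a formality. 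You should be aware that the paper's own proof does not close this gap either: it passes directly from ``$F$ has a solution'' to ``there exists an ordering $\sigma_B$ of the $B$s avoiding all forbidden patterns'' with no argument about extracting a total order from a satisfying assignment (only the converse direction, that a valid ordering yields a satisfying assignment, is immediate in both your write-up and the paper's). So measured against the paper you have reproduced its argument faithfully, including its weakest point; measured as a proof of the statement, the missing lemma you deferred is a genuine gap, and honestly naming it does not substitute for proving it.
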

\begin{proof}
For each pair of vertices $v,w$ of $G$, we create variables $p_{v\prec w}$ and $p_{w\prec v}$ (representing the 
ordering of segments $v$ and $w$ on the ground line).
We will enforce $p_{v\prec w} = \neg p_{w\prec v}$ by adding clauses 
$(\neg{p_{v\prec w}} \vee \neg{p_{w\prec v}}) \wedge ({p_{v\prec w}} \vee {p_{w\prec v}})$. (One variable would suffice, but it is notationally easier to have both.)
We first set the truth values of all the variables involving two vertices of $A$ based on $\sigma_A$. We then add a few other clauses based on $P_1,P_2,P_3$, as follows.

For every $b_p,b_q,b_r$ giving rise to $P_1$, we add the clauses  $(\neg p_{{b_q\prec b_r}}\vee p_{b_q\prec b_p})$ and $(\neg p_{{b_p\prec b_q}}\vee p_{b_r\prec b_q})$. The first clause means that if $b_q\prec b_r$, then to avoid $P_1$, we must have $b_q\prec b_p$. Similarly, the second clause means if $b_p\prec b_q$, then to avoid $P_1$, we must have $b_r\prec b_q$. These clauses ensure  that if the SAT formula has a 
solution, then no configuration of the form $P_1$ can arise.

For every $b_p,b_q$ giving rise to $P_2$, we set $p_{b_q\prec b_p}$ to true. 
This would avoid any forbidden configuration of the form $P_2$ in a 
solution of the 2-SAT formula.

Finally, for every $b_p,b_q,b_r$ giving rise to $P_3$,  we add the  clauses $(\neg p_{{b_q\prec b_r}}\vee p_{b_q\prec b_p})$ and $(\neg p_{{b_p\prec b_q}}\vee p_{b_r\prec b_q})$. Note that these clauses can be interpreted in the same way as for $P_1$, i.e., if the 2-SAT formula has a 
solution, then no configuration of the form $P_3$ can arise.

Let $F$ be the resulting 2-SAT formula, which can be solved in linear time in the input size~\cite{DBLP:journals/ipl/AspvallPT79}, i.e., $O((|A| + |B|)^2)$ time.
If $F$ does not have a solution, then there does not exist any ordering of the $B$s that avoids the forbidden patterns. Thus $G$ does not admit the required \Stick representation. If $F$ has a solution, then there exists an ordering $\sigma_B$ of $B$s that together with $\sigma_A$ avoids all the forbidden patterns. By Theorem~\ref{thm:key}, $G$ admits the required \Stick representation, and it can be constructed from $\sigma_A$ and  $\sigma_B$ using Theorem~\ref{thm:fafb}.

Thus the time complexity of the algorithm is dominated by the time to construct the 2-SAT formula, which is $O(|A|^3|B|^3)$. Pseudocode for the Algorithm~\ref{algo2} is
given in Appendix~\ref{app:algo}.
\end{proof}

\smallskip\noindent
\textbf{Bipartite Graphs Representable for All $A$ Orderings: }
We also considered the class of bipartite graphs $G=(A \cup B, E)$ such that for every ordering of the vertices of $A$ there exists a \Stick representation.  We will call this the \StickfA class. Although we do not have a characterization of the \StickfA class, we describe some positive and negative instances below in Remark~\ref{r:a} and Remark~\ref{rem:b}, with proofs in
Appendix~\ref{app:fa}.

\begin{remark}\label{r:a}
Any bipartite graph $G=(A\cup B,E)$ with at most three vertices in $A$ belongs to the \StickfA class.
\end{remark}

\begin{remark}\label{rem:b}
A graph does not belong to \StickfA if its
bipartite adjacency matrix contains the submatrix 
\mbox{\scriptsize \begin{blockarray}{ccc}
\begin{block}{c[cc]}
  {$a_1$\ } & 1 & $*$ \\
  {$a_2$\ } & 0 & 1 \\
  {$a_3$\ } & 1 & 0 \\
  {$a_4$\ } & $*$ & 1 \\
\end{block}
\end{blockarray}} . %
(Here the columns are unordered.)
\end{remark}


\section{Stick graphs}\label{sec:abfree}

In this section we examine general \Stick representations, i.e., we do not impose any constraints on the ordering of the vertices. 

\begin{tcolorbox}[colframe=gray,colback=lightgray!20!white,boxrule=1pt,arc=0.4em,boxsep=-1mm]
 \textbf{Problem:}  {\textsc{\Stick Representation}}\\ 
 \textbf{Input:} A bipartite graph $G=(A\cup B,E)$.\\
 \textbf{Question:}
 Does $G$ admit a \Stick representation  such that the vertices in $A$ and $B$ correspond to horizontal and vertical segments, respectively?
\end{tcolorbox}

It is an open question to find a polynomial time algorithm for the above problem of recognizing \Stick graphs.
In this section we give some partial results (Remarks~\ref{r:y}--\ref{r:z}). 
The proofs of these remarks are included in Appendix~\ref{app:last}.

We begin with some definitions.
A matrix has the \emph{simultaneous consecutive ones} property if the rows and columns can be permuted so that the 1's in each row and each column  appear consecutively~\cite{DBLP:journals/tcs/OswaldR09}. A  \emph{one-sided drawing} of a planar bipartite graph $G=(A\cup B,E)$ is a planar straight-line drawing of $G$, where all vertices in $A$ lie on the $x$-axis, and the vertices of $B$ lie strictly above the $x$-axis~\cite{DBLP:conf/ciac/FossmeierK97}.

\begin{remark}\label{r:x}
Let $G=(A\cup B, E)$ be a bipartite graph and let $M$ be its adjacency matrix, where the rows and columns correspond to $A$s and $B$s, respectively. If $M$ has the simultaneous consecutive ones property, then $G$ admits a \Stick representation, which can be computed in $O(|A||B|)$ time. 
\end{remark}

\begin{proof}

One can determine whether $M$ has the simultaneous consecutive ones property in $O(|A||B|)$ time~\cite{DBLP:journals/tcs/OswaldR09}, and if so, then one can construct such a matrix $M'$ within the same time complexity. 

We now show how to construct the \Stick representation from $M'$. For each row (resp.,   column), we draw a horizontal (resp., vertical) segment starting from the rightmost (resp., topmost) 1 entry. We extend the horizontal segments to the left and vertical segments downward such that  they touch a ground line $\ell$, e.g., see Fig.~\ref{fig:rx}(a)--(b).

Let the resulting drawing be $D$, which may contain many unnecessary crossings. However, for each unnecessary crossing, we can follow the  segments involved in the crossings upward and rightward to find two distinct 1 entries.  Since the matrix has the simultaneous ones property, the violated entries in each row (column) must lie consecutively at the left end of the row (bottom end of the column). Therefore, one can find a $(+x,-y)$-monotone path $P$ that separates the violated entries from the rest of the matrix, e.g., see the shaded region in Fig.~\ref{fig:rx}(b).

Let $b_1,b_2,\ldots,b_k$ be the bend points creating $90^\circ$ angles towards $\ell$. To compute the required \Stick representation, we  remove these bends one after another, as follows. Consider the topmost bend point $b_i$, e.g., see $b_1$ in Fig.~\ref{fig:rx}(c). Imagine a Cartesian coordinate system with origin at $b_i$. Move the rows above $b_i$ and columns to the right of $b_i$ towards the upward and rightward directions, respectively, as illustrated in Fig.~\ref{fig:rx}(d). It is straightforward to observe that one now can construct a ground line $\ell'$ through $b_i$ such that the violated entries lie in the region below the path determined by $b_{i+1},\ldots,b_k$.
\end{proof}

\begin{remark}\label{r:y}
Let $G=(A\cup B, E)$ be an $n$-vertex bipartite graph that admits a one-sided planar drawing.  Then $G$ is a \Stick graph, and its \Stick representation can be computed in $O(n^2)$ time. 
\end{remark}

\begin{remark}\label{r:z}
Let $H$ be the graph obtained by deleting a perfect matching from  a complete bipartite graph $K_{4,4}$.  Any graph $G=(A \cup B, E)$ containing $H$ as an induced subgraph does not admit a \Stick representation. Since $H$ is a planar graph, not all planar bipartite graphs are \Stick graphs.
\end{remark}


\section{Open Problems}\label{sec:Conclusions}

\bigskip
\noindent
{\bf Open Problem 1.}
What is the complexity of recognizing \Stick graphs?
Is the problem NP-complete? 
By Theorem~\ref{thm:key} the problem is 
equivalent to ordering the rows and columns of a 0-1 matrix to exclude the 3 forbidden submatrices given in the Theorem statement. 
Note that 
these forbidden submatrices involve 5 or 6 rows and columns (vertices of the graph) so the 
results of Hell et al.~\cite{hell2014ordering}, which apply to patterns of at most 4 vertices in a bipartite graph, 
do not provide a polynomial time algorithm.

One possible approach using 3-SAT is as follows.
Given a bipartite graph $G=(A\cup B, E)$, one can create a 3-SAT formula $\Phi$ such that $\Phi$ is satisfiable if and only if $G$ admits a \Stick  representation, as follows. For each pair of vertices $v,w$ of $G$, create variables $p_{v\prec w}$ and $p_{w\prec v}$ (representing the ordering of $v$ and $w$ on the ground line), and add clauses $(\neg {p_{v\prec w}} \vee \neg {p_{w\prec v}}) \wedge ({p_{v\prec w}} \vee {p_{w\prec v}})$ to enforce $p_{v\prec w} = \neg p_{w\prec v}$. Now express the conditions C1 and C2 from Section~\ref{sec:AB-fixed} as 3-SAT clauses.

\begin{description}
\item[]\textbf{$\Phi_1$:} 
(Condition C1.) If $m_{i,p}=1$, then set $p_{a_i\prec b_p} = 1$.  
 
\item[]\textbf{$\Phi_2$:} 
(Condition C2.) We must express the condition that if the ordered submatrix 
\mbox{\scriptsize \begin{blockarray}{ccc}
& $b_p$ & $b_q$ \\
\begin{block}{c[cc]}
  {$a_i$\ } & 1 & $*$ \\
  {$a_j$\ } & 0 & 1 \\
\end{block}
\end{blockarray}} exists, then $p_{b_p\prec a_j} =1$.  
Thus, if $m_{i,p}=1, m_{j q}=1$ and $m_{j,p}=0$, then we add the clause $(\neg {p_{a_i\prec a_j}}\vee \neg {p_{b_p\prec b_q}} \vee \neg {p_{a_j\prec b_p}})$.

\item[]\textbf{$\Phi_3$:} For each triple $u,v,w$ of vertices, add the clause $(\neg {p_{u\prec v}} \vee \neg {p_{v\prec w}} \vee p_{u\prec w})$. Intuitively, these  are transitivity constraints, which would ensure a total ordering  on the ground line.
\end{description}

It is not difficult to show that the 3-SAT $\Phi$ is satisfiable if and only if $G$ admits the required intersection representation. However, since $\Phi$ contains $O(n^2)$ variables, using known SAT-solvers would not be faster than a naive algorithm that simply guesses the order of the segments along the ground line.
Therefore, an interesting direction for future research would be to find a 3-SAT formulation with a linear number of variables.

\bigskip\noindent 
{\bf Open Problem 2.} Can we improve the time complexity of the recognition algorithm for graphs with fixed $A$s?

\subsubsection{Acknowledgments}
The research of A. Lubiw and D. Mondal is supported in part by the Natural Sciences and Engineering Research Council of Canada (NSERC).
Also, this work is supported in part by NSF grants CCF-1423411 and CCF-1712119.

\newpage

 \bibliographystyle{splncs04}
 
 \newpage
\appendix

\section{Fixed $A$s}
\label{app:fa}

\begin{proof}[Remark~\ref{r:a}]

The proof is by construction. For any ordering $\sigma_A$, we can categorize the columns into at most eight categories (assuming $A$ has exactly three vertices, the other cases are straightforward). We then organize those categories in the following order:
{\arraycontrol \begin{blockarray}{ccccccccc}
& $b_1$ & $b_2$ & $b_3$  & $b_4$ & $b_5$ & $b_6$ & $b_7$  & $b_8$\\
\begin{block}{c[cccccccc]}
  {$a_1$\ } & 1 & 0 & 1 & 0 & 0 & 1 & 1 & 0\\ 
  {$a_2$\ } & 0 & 1 & 1 & 0 & 1 & 1 & 0 & 0\\
  {$a_3$\ } & 0 & 0 & 0 & 1 & 1 & 1 & 1 & 0\\
\end{block}
\end{blockarray}} .
It is now straightforward to see that the resulting ordering corresponds to the required \Stick representation, e.g., see Fig.~\ref{fig:unf10011001}(a).
\end{proof}

 \begin{figure}[ht]
\centering
\includegraphics[width=.7\textwidth]{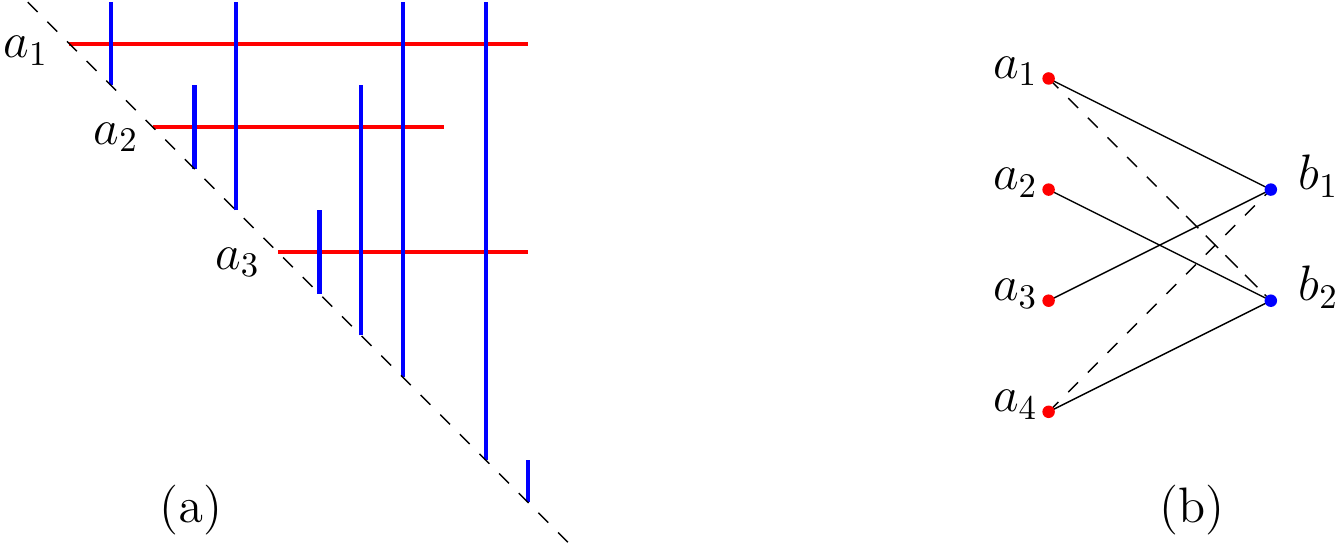}
\caption{(a) Illustration for Remark~\ref{r:a}. (b) Illustration for Remark~\ref{rem:b}.}
\label{fig:unf10011001}
\end{figure}

\begin{proof}[Remark~\ref{rem:b}]

We refer the reader to  Fig.~\ref{fig:unf10011001}(b). Any $\sigma_A$ consistent with the given matrix 
would serve as a  negative instance, as follows. The adjacency matrix 
{\arraycontrol \begin{blockarray}{cccccccc}
& $b_1$ & $b_2$  \\
\begin{block}{c[ccccccc]}
  {$a_1$\ } & 1 & *\\ 
  {$a_2$\ } & 0 & 1\\
  {$a_3$\ } & 1 & 0\\
\end{block}
\end{blockarray}}  %
would require $b_2$ to come after $b_1$, while the matrix 
{\arraycontrol \begin{blockarray}{cccccccc}
& $b_1$ & $b_2$  \\
\begin{block}{c[ccccccc]}
  {$a_2$\ } & 0 & 1\\ 
  {$a_3$\ } & 1 & 0\\
  {$a_4$\ } & * & 1\\
\end{block}
\end{blockarray}}  %
would require $b_1$ to come after $b_2$. Thus there is no consistent ordering for the vertices of $B$, and hence $G$ does not have a \Stick representation respecting $\sigma_A$. 
\end{proof}

\clearpage
\newpage

\section{Stick graphs}
\label{app:last}

\begin{figure}[ht]
\centering
\includegraphics[height=\textwidth]{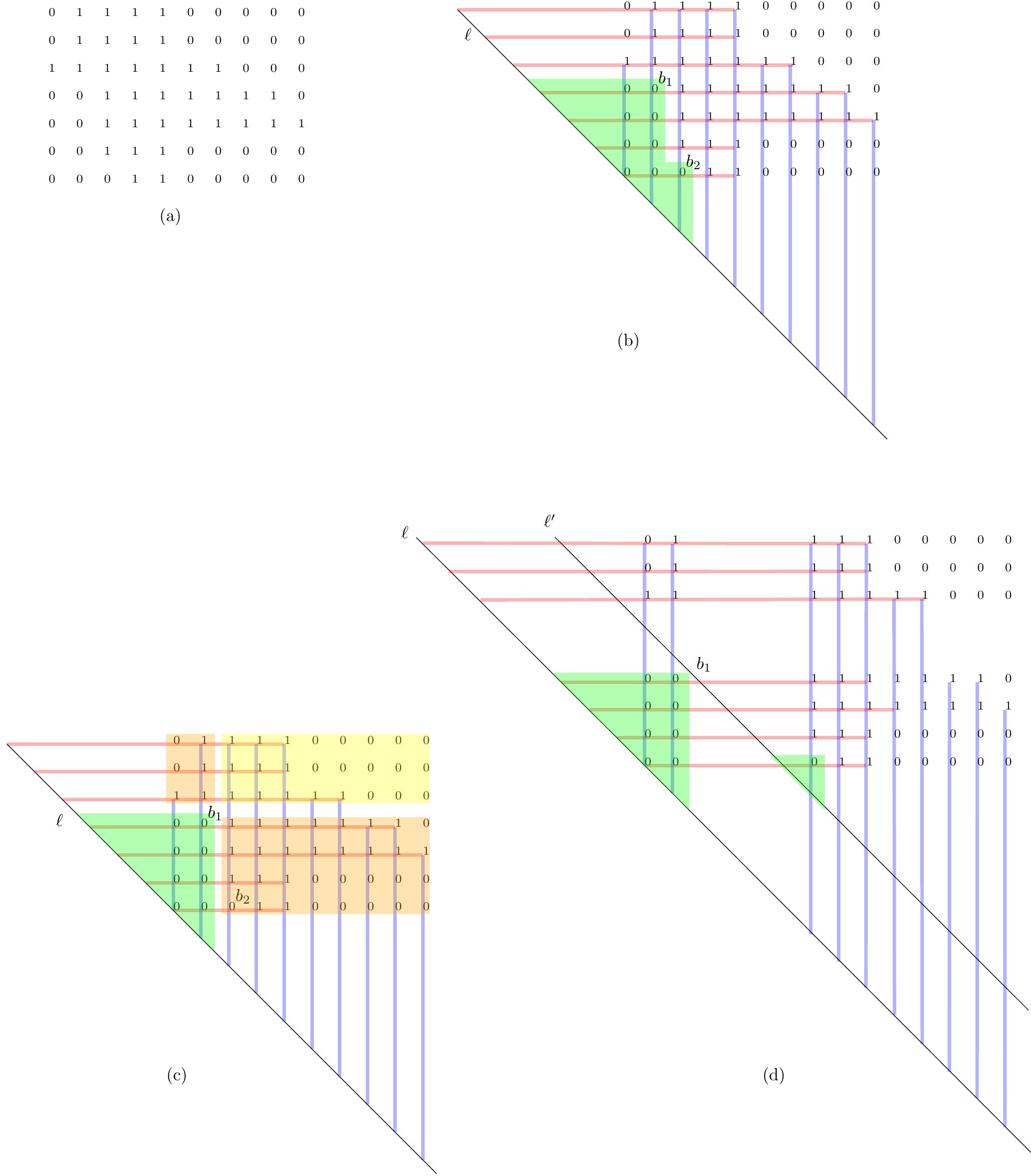}
\caption{Illustration for Remark~\ref{r:x}.}
\label{fig:rx}
\end{figure}

\begin{figure}[ht]
\centering
\includegraphics[height=\textwidth]{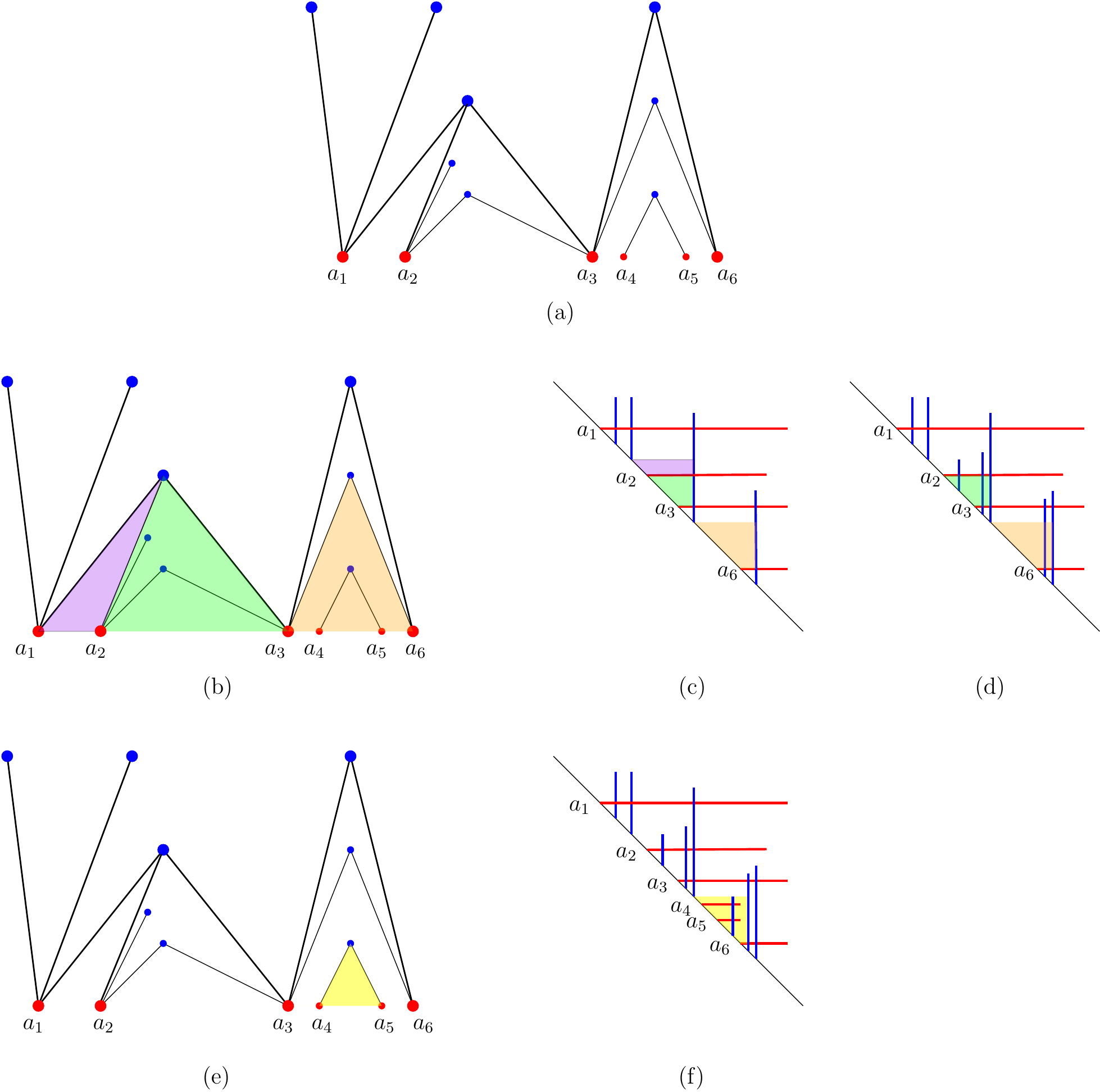}
\caption{Illustration for Remark~\ref{r:y}.}
\label{fig:ry}
\end{figure}

\begin{proof}[Remark~\ref{r:y}]
Given an $n$-vertex bipartite graph $G=(A\cup B, E)$, one can decide whether $G$ admits a  one-sided planar drawing in $O(n^2)$ time~\cite{DBLP:conf/ciac/FossmeierK97}, and if so, then one can construct such a drawing within the same time complexity. 

It now suffices to show how to construct a \Stick representation from the one-sided drawing. Let $\Gamma$ be a one-sided drawing of $G$. Imagine that we connected the $A$-vertices in a path from left to right in order they appear on the $x$-axis such that all the edges of the path are on the outerface. We define the  \emph{upper set} of $\Gamma$ to be the $B$ vertices that lie on the outerface, including  their neighbors.  

We start by drawing the upper set.
 It is straightforward to draw a \Stick representation by first drawing the $A$-vertices in the order they appear on the upper set from top to bottom on the ground line, and then the $B$-vertices to complete the \Stick representation of the set. Fig.~\ref{fig:ry}(a) illustrates a  one-sided drawing, and Fig.~\ref{fig:ry}(c) illustrates a drawing of its upper set.

Let $\Gamma'$ be the one-sided drawing by deleting the $B$-vertices of degree one from the upper set of $\Gamma$. Note that every $B$-vertex $w$ of degree two or more on the upper set, `covers'   smaller one-sided drawings below the edges that connect $w$ to a pair of consecutive $A$-neighbors, e.g., see the shaded regions  of Fig.~\ref{fig:ry}(b). We continue drawing each of these smaller one-sided drawings recursively between their corresponding  leftmost and rightmost $A$-vertices, which have already been drawn in the previous level. Fig.~\ref{fig:ry}(d) illustrates the drawing in the second recursion level, and Fig.~\ref{fig:ry}(e)--(f) illustrates the third recursion level.
\end{proof}

\begin{figure}[ht]
\centering
\includegraphics[height=.4\textwidth]{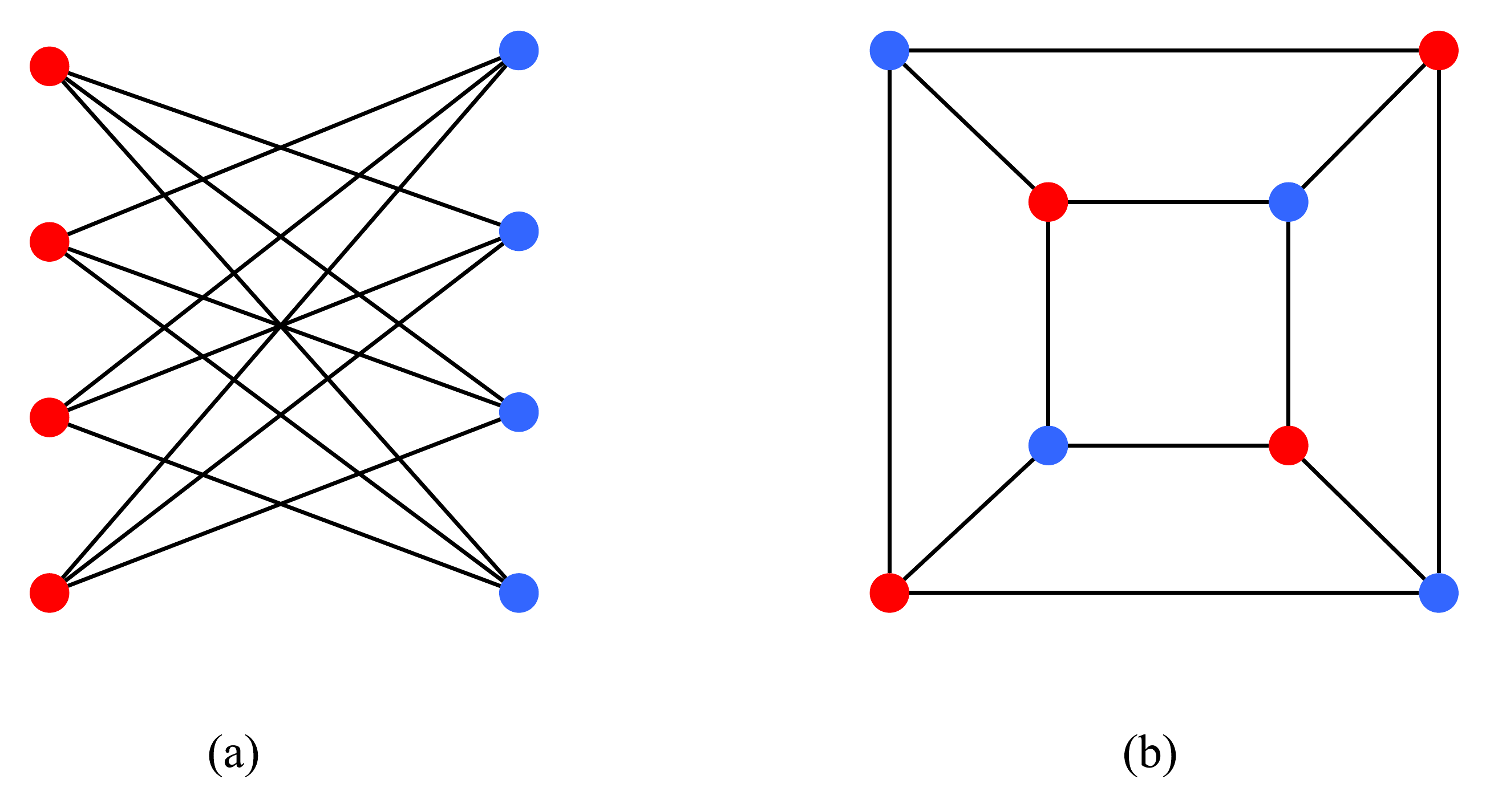}
\caption{Graph that does not admit a \Stick representation. Its matrix representation can have all zeros placed on a diagonal.}
\label{fig:unfzerodiag}
\end{figure}

\begin{proof}[Remark~\ref{r:z}]

It suffices to show that $H$  (Fig.~\ref{fig:unfzerodiag}(a)) does not admit a \Stick representation.
 It is straightforward to permute the rows and columns of the matrix representation of $H$ such that all zeros lie on the main diagonal, e.g.,
$M =${\arraycontrol \begin{blockarray}{cccc}
\begin{block}{[cccc]}
0 & 1 & 1 & 1\\ 
1 & 0 & 1 & 1\\
1 & 1 & 0 & 1\\
1 & 1 & 1 & 0\\
\end{block}
\end{blockarray}} .
Therefore, for every permutation  of rows and columns, we can pick the two columns that have zeros in the middle two rows to obtain the  configuration 
 {\arraycontrol \begin{blockarray}{cc}
\begin{block}{[cc]}
1 & *\\
0 & 1\\
1 & 0\\
* & 1\\
\end{block}
\end{blockarray}} .
 As we discussed in Sec.~\ref{sec:A-fixed}, there cannot be any \Stick representation with this row ordering. Consequently, the graph $H$, as well as  no graph containing $H$ as an induced subgraph, can have a \Stick representation.
 
Since $H$ is a planar graph (see Fig.~\ref{fig:unfzerodiag}(b)),  not all planar bipartite graphs are \Stick graphs.
 \end{proof}

 \section{Algorithms}
\label{app:algo}

\begin{algorithm}[htb]

\caption{Algorithm for Fixed-$A$-Fixed-$B$-\Stick Recognition}
\label{algo1}
\begin{algorithmic}[1]
\State{{\bf Input:} A bipartite graph $G=(A{\cup}B, E)$, an ordering $\sigma_A$ and an ordering $\sigma_B$}

\State $M \gets$ A matrix representation of $G$ whose rows follow the ordering $\sigma_A$, and columns follow the ordering $\sigma_b$

\State $H \gets $ A graph with vertex set $(A{\cup}B)$ but without any edge.

\For{\textbf{each } consecutive vertices $a_{i-1}, a_i$ in $\sigma_A$}
add the edge $(a_{i-1}, a_i)$ to $H$
\EndFor{}
  
\For{\textbf{each } consecutive vertices $b_{j-1}, b_j$ in $\sigma_B$}
add the edge $(b_{j-1}, b_j)$ to $H$
\EndFor{}

\For{\textbf{each } \text{entry } $m_{i,j}=1$ in $M$} add the edge $(a_i, b_j)$  
 to $H$
\EndFor{}

\For{\textbf{each } $a_i,a_j,b_p,b_q$ (respecting $\sigma_A$ and $\sigma_B$) of the form 
{\arraycontrol \begin{blockarray}{ccc}
& $b_p$ & $b_q$ \\
\begin{block}{c[cc]}
  {$a_i$\ } & 1 & $*$ \\
  {$a_j$\ } & 0 & 1 \\
\end{block}
\end{blockarray}} 
} add the edge $(b_p, a_j)$ to $H$
\EndFor{}

\If{$H$ contains a cycle}
\Return{false} //No solution exists
\EndIf{}

\State $\sigma \gets$ A topological ordering of the vertices of $H$\\ 
\Return{$\sigma$} 
\end{algorithmic}
\end{algorithm}

\begin{algorithm}[h]
\caption{Algorithm for Fixed-$A$-\Stick Recognition}
\label{algo2}
\begin{algorithmic}[1]

\State{{\bf Input:} A bipartite graph $G=(A \cup B, E)$, an ordering $\sigma_A$}

\State $P_1 \gets$ {\arraycontrol \begin{blockarray}{cccc}
& $b_p$ & $b_q$ & $b_r$\\
\begin{block}{c[ccc]}
  {$a_i$\ } & $*$  & $1$ & $*$ \\
  {$a_j$\ } & $*$  & $0$ & $1$ \\
  {$a_k$\ } & $1$  & $*$ & $*$ \\
\end{block}
\end{blockarray}} , %
$P_2 \gets${\arraycontrol \begin{blockarray}{ccc}
& $b_p$ & $b_q$ \\
\begin{block}{c[cc]}
  {$a_i$\ } & $1$ & $*$ \\
  {$a_j$\ } & $0$ & $1$ \\
  {$a_k$\ } & $1$ & $*$ \\
\end{block}
\end{blockarray}} , %
$P_3 \gets${\arraycontrol \begin{blockarray}{cccc}
& $b_p$ & $b_q$ & $b_r$\\
\begin{block}{c[ccc]}
  {$a_i$\ } & $*$ & $1$ & $*$ \\
  {$a_j$\ } & $1$ & $0$ & $1$ \\
\end{block}
\end{blockarray}} .

\State $\Phi\gets $ A 2-SAT with variables of the form $p_{u\prec w}$ and $p_{w\prec u}$, for every pair of vertices $u,w$ of $G$.

\For{\textbf{each } $a_i,a_j$ (respecting $\sigma_A$)}
$p_{a_i\prec a_j} \gets  {\it true}$,
$p_{a_j\prec a_i} \gets {\it false}$
\EndFor{} 

\For{{\textbf{each}} $P_1$ or $P_3$}
add the clauses $(\neg {p_{{b_q}\prec {b_r}}} \vee p_{{b_q}\prec {b_p}})$ and $(\neg {p_{{b_p}\prec {b_q}}} \vee p_{{b_r}\prec {b_q}})$
\EndFor{}

\For{{\textbf{each}} $P_2$}
$p_{b_q\prec b_p} \gets {\it true}$,
$p_{b_p\prec b_q} \gets {\it false}$
\EndFor{}

\For{{\textbf{each}} $P_3$}
add the clauses $(\neg {p_{{b_q}\prec {b_r}}} \vee p_{{b_q}\prec {b_p}})$ and $(\neg {p_{{b_p}\prec {b_q}}} \vee p_{{b_r}\prec {b_q}})$
\EndFor{}

\If{$\Phi$ does not admit an affirmative answer}
\Return{false} //No solution exists
\EndIf{}

\State $\sigma \gets$ A total ordering of vertices  determined  by any solution of $\Phi$ \\
\Return{$\sigma$} 

\end{algorithmic}
\end{algorithm}

\end{document}